\numberwithin{equation}{section} % up to section
\newtheorem{lma}{Lemma}
\newtheorem{Def}{Definition}
\newtheorem*{auxthm}{Auxiliary Theorem}
\newtheorem*{Mthm}{Main Theorem}
\newcommand{\pfa}{{\noindent\bfseries\itshape Proof of the Auxiliary Theorem. }}
\newcommand{\pfm}{{\noindent\bfseries\itshape Proof of the Main Theorem. }}
\newcommand{\norm}[1]{\left\Vert #1 \right\Vert}
\begin{document}

\title{Equilibria in a large production economy \\  
with an infinite dimensional commodity space \\ and  price dependent preferences\thanks{We thank two anonymous referees for their comments and suggestions. 
%We are also grateful to M. Ali Khan for encouragement.  Errors are, of course, solely ours.  
This research was supported by Hankuk University of Foreign Studies Research Fund.} 
}

\author{Hyo Seok Jang\footnote{%
Department of Mathematical Sciences, Seoul National University, Seoul, 
Korea. Email: hyoseok.jang@snu.ac.kr} \;\;
Sangjik Lee\footnote{% 
Division of Economics, Hankuk University of Foreign Studies, Seoul,  Korea. Email: slee@hufs.ac.kr.} 
} 

\maketitle

\vspace{5mm}
{\linespread{1}
\begin{abstract}
\noindent
%We extend Greenberg et al. \cite{gsw} to a production economy with infinitely many commodities and prove the existence of a competitive equilibrium for the economy. We employ a saturated measure space for the set of agents and apply recent results for an infinite dimensional separable Banach space such as Lyapunov's convexity theorem and an exact Fatou's lemma to obtain the result.
We prove the existence of a competitive equilibrium in a production economy with infinitely many commodities and a measure space of agents whose preferences are price dependent. We employ a saturated measure space for the set of agents and apply recent results for an infinite dimensional separable Banach space such as Lyapunov's convexity theorem and an exact Fatou's lemma to obtain the result.

\vspace{5mm}
\noindent
\textbf{JEL Classification Numbers:} C62, D51.

\vspace{5mm} 
\noindent 
\textbf{Keywords:} Separable Banach space, Saturated measure space, Price dependent preferences, Lyapunov's convexity theorem, Fatou's lemma 
\end{abstract}
}

\section{Introduction} % (fold)
\label{sec:introduction}
The purpose of this paper is to prove the existence of a competitive equilibrium in a production economy with infinitely many commodities and a measure space of agents whose preferences are price dependent.
In a seminal paper, Aumann \cite{aumann1966} demonstrated the existence of a competitive equilibrium for an exchange economy with a finite dimensional commodity space and a continuum of agents modeled as an atomless finite measure space by utilizing Lyapunov's convexity theorem to dispense with convex preferences. Aumann's model in \cite{aumann1966} was generalized to allow incomplete preferences by Schmeidler \cite{schmeidler1969} and to include production by Hildenbrand \cite{hildenbrand70}.

As Shafer\cite{shafer1974}\footnote{This was drawn to our attention by an anonymous referee.}  and Balasko \cite{balasko2003a} pointed out, %price dependent preferences have long been recognized. 
price dependent preferences have been traditionally explained by consumers taking relative prices as an indication of quality. In addition, we see other applications of price dependent preferences in the literature: Shafer \cite{shafer1974} showed the possibility of relating price dependent preferences to non-transitive preferences and Balasko \cite{balasko2003b} demonstrated the equivalence of a temporary financial equilibrium model with an Arrow-Debreu economy where preferences are price dependent. 

%The existence of a competitive equilibrium in a large economy with price dependent preferences and a finite number of commodities was first proved by Greenberg et al. \cite{gsw}.

Greenberg et al. \cite{gsw} first proved the existence of a competitive equilibrium in a large economy with price dependent preferences and a finite number of commodities.
In \cite{gsw}, the authors considered a large production economy with non-convex preferences. They reformulated the production economy as a three-person game and applied Debreu's social equilibrium existence theorem to obtain a Walrasian equilibrium. 
In their proof, they applied Lyapunov's convexity theorem and Fatou's Lemma in several dimensions. 
In order to utilize Fatou's lemma, Greenberg et al. \cite{gsw} assumed the compactness of the consumption sets, which differs from Aumann's original model. Liu \cite{liu} dealt with a coalition production economy based on Greenberg et al. \cite{gsw}.

For infinite dimensional commodity spaces,  
Khan and Yannelis \cite{ky91} considered a large exchange economy and showed the existence of a competitive equilibrium.  
In \cite{ky91}, the commodity space is an ordered separable Banach space whose positive cone has a non-empty interior. Until recently, Lyapunov's convexity theorem and an exact Fatou's lemma for an infinite dimensional separable Banach space were not available. Therefore, the authors had to impose the assumption of convex preferences. They relied on the weak compactness of feasible allocations to extract a convergent subsequence of competitive equilibria for truncated subeconomies to obtain the existence of a Walrasian equilibrium. 
Now that the necessary mathematical tools are at hand, it is natural to ask as to whether equilibrium existence results for a large economy with an infinite dimensional commodity space, non-convex preferences and price externalities are available. We give a positive answer in this paper. 

Saturated or super-atomless measure spaces have played an important role in recent mathematical economics. Podczeck \cite{pod2008} and Sun and Yannelis \cite{sy} successfully proved the convexity of Bochner integrals of an infinite dimensional separable Banach space valued correspondence on a saturated measure space. Based on saturated measure spaces, Khan and Sagara \cite{ks2013} proved Lyapunov's convexity theorem for vector measures taking values in an infinite dimensional separable Banach space and Greinecker and Podczeck \cite{gp2013} also showed it. Khan and Sagara \cite{ks2014} established an exact Fatou's lemma for an infinite dimensional separable Banach space. Khan et al. \cite{kss2016} proved an exact Fatou lemma for Gelfand integrals which was also established via Young measures by  Greinecker and Podczeck \cite{gp2017}. 
These results have already been applied to general equilibrium theory in several papers; see Khan and Sagara \cite{ks2016, ks2017}, Khan and Suzuki \cite{ksu2016} and Lee \cite{lee}.  In \cite{ks2017}, the authors  emphasized the importance of saturated measures by saying that ``the significance of the saturation property lies in the fact that it is necessary and sufficient for the weak/weak* compactness and the convexity of the Bochner/Gelfand integral of a multifunction as well as the Lyapunov convexity theorem in separable Banach spaces/their dual spaces.''

In this paper, we consider a large production economy whose commodity space is that of Khan and Yannelis \cite{ky91} and whose agents have non-convex and price dependent preferences, similar to Greenberg et al. \cite{gsw}.
We employ a saturated measure space of agents and hence, we can utilize the convexity of a Bochner integral of a Banach space valued correspondence, Lyapunov's convexity theorem, and the exact Fatou's lemma for an infinite dimensional Banach space. With these new results, we are able to relax the convexity of preferences and production sets, and apply Debreu's social equilibrium existence theorem. Moreover, we can obtain a competitive equilibrium as the limit of a sequence of competitive equilibria for truncated subeconomies. 
We dispense with the uniform compactness assumption on the consumption sets 
and production sets, which was used in \cite{gsw} and in \cite{liu}. %

The paper proceeds as follows: Section 2 contains notations and definitions. We present our model in Section 3, and our main and auxiliary results are in Section 4. The proof of the auxiliary result is in Section 5 followed by the proof of the main theorem in Section 6.  Section 7 concludes the paper with our remarks.

\section{Notation and Definitions}
Let \(X, Y\) be topological spaces. A set-valued function or a correspondence \(F\) from \(Y\) to the family of non-empty subsets of \(Y\) is called \emph{upper semicontinuous} if the set \(\{x:X: F(x)\subset V\}\) is open in \(X\) and  said to be \emph{lower semicontinuous} if the set \(\{x:X: F(x)\cap V\neq \emptyset\}\) is open in \(X\) for every \(V\) of \(Y\). When \(Y\) is a Banach space, \(F\) is \emph{norm upper semicontinuous} if the set \(\{x:X: F(x)\subset V\}\) is open in \(X\) for every norm open subset \(V\) of \(Y\). And \(F\) is called \emph{weakly upper semicontinuous} if the set \(\{x:X: F(x)\subset V\}\) is open in \(X\) for every weakly open subset \(V\) of \(Y\). We say that \(F\) is \emph{norm lower semicontinuous} if the set \(\{x:X: F(x)\cap V\neq \emptyset\}\) is open in \(X\) for every norm open subset \(V\) of \(Y\) and \(F\) is said to be \emph{weakly lower semicontinuous} if the set \(\{x:X: F(x)\cap V\neq \emptyset\}\) is open in \(X\) for every weakly open subset \(V\) of \(Y\).

Let \((T,{\mathcal T},\mu)\) be a finite measure space and \(E\) be a Banach space. 
A measurable function \(f:(T,{\mathcal T},\mu) \to E\) is said to be \emph{Bochner integrable} if there exists a sequence of simple functions \(\{f_n\}_{n\in \mathbb{N}}\) such that 
\begin{equation}
\lim_{n\to \infty}\int_T \norm{f_n(t) - f(t)} d\mu=0
\end{equation}
where \(\mathbb{N}\) denotes the set of natural numbers. 
For each \(S\in {\mathcal T}\) the integral is defined to be \(\int_S f(t)d\mu = \lim_{n\to \infty}\int_S f_n(t)d\mu\).   
Denote by \( L^1(\mu,E) \) the space of (the equivalence classes of) \(E\)-valued Bochner integrable functions \( f:T\to E \) normed by \( \norm{f }_1=\int_T \norm{ f(t) } d\mu \). 

The \emph{weak upper limit} of a sequence \(\{S_n\}\) of subsets in \(E\) is defined by
\begin{equation}
 w\text{-Ls}\; S_n = \{x\in E: \exists \{x_{n_k}\} \text{ such that } x=w\text{-}\lim x_{n_k}, x_{n_k}\in S_{n_k},\text{ for all } k\in \mathbb{N}\}
\end{equation}
where \(\{x_{n_k}\}\) is a subsequence of a sequence \(\{x_n\}\) and \(w\text{-}\lim_n x_{n_k}\) denotes the weak limit point of \(\{x_{n_k}\}\). 
  
A correspondence \(F:T\to 2^{E}\) is said to be \emph{measurable} if for every open subset \(V\) of \(E\), the set \(\{t\in T: F(t)\cap V\neq \emptyset\}\in {\mathcal T}\). The correspondence \(F\) is said to have a  \emph{measurable graph} if its graph \(G_F=\{(t,x)\in T\times E: x\in F(t)\}\) belongs to the product \(\sigma\)-algebra \({\mathcal T}\otimes {\mathcal B}(E,w)\), where \({\mathcal B}(E,w)\) denotes the Borel \(\sigma\)-algebra of \(E\) generated by the weak topology. If correspondences from \(T\) to \(E\) are closed valued, measurability and graph measurability are equivalent when \((T,\mathcal{T},\mu)\) is complete and \(E\) is separable.\footnote{See Theorem 8.1.4 in \cite{auf}.} A measurable correspondence \(F:T\to 2^{E}\) is \emph{integrably bounded} if there exists a real-valued integrable function \(h\) on \((T,{\mathcal T},\mu)\) such that  \(\sup\{\norm{x} :x\in F(t)\}\leq h(t)\) for almost all \(t\in T\).

A measurable function \(f\) from \((T,{\mathcal T},\mu)\) to \(E\) is called a \emph{measurable selection} of the correspondence \(F\) if \(f(t)\in F(t)\) for almost all \(t\in T\). By Aumann's measurable selection theorem in \cite{aumann1969}, if \((T,{\mathcal T},\mu)\) is a complete finite measure space, \(F\) has a measurable graph, and \(E\) is separable, then \(F\) has a measurable selection.  We denote by \({\mathcal S}^1_F\) the set of all \(E\)-valued Bochner integrable selections for the correspondence \(F\), i.e., 
\(
{\mathcal S}_F^1=\{f\in L^1(\mu,E):f(t)\in F(t)\; \text{a.e. }t\in T\}.
\)
When \(F\) is also integrably bounded, it admits a Bochner integrable selection so that \({\mathcal S}^1_F\) is non-empty. 
The integral of the correspondence \(F\) is defined by
\begin{equation}
\int_T F(t)d\mu=\{\int_T f(t)d\mu:f\in {\mathcal S}_F^1\}.
\end{equation}

A sequence of correspondences \(\{F_n\}\) from \(T\) to \(E\) is said to be \emph{well-dominated} if there exists an integrably bounded and weakly compact-valued correspondence \(\phi: T\to 2^{E}\) such that \(F_n(t)\subset \phi(t)\) a.e. \(t\in T\) for each \(n\).

Let \(E\) be an ordered Banach space equipped with ordering \(\geq\) such that the positive cone \(E_+=\{x\in E: x\geq 0\}\) of \(E\) is closed. For \(x,y\in E\), \(x>y\) means \(x-y\in E_+\) and \(x\neq y\). 
%A function \(f:E\to\mathbb{R}\) is said to be strictly increasing, if, for \(x\) and \(y\in E\), \(x>y\) implies \(f(x)> f(y)\).
%On the other hand, since \( E \) is a topological vector space, we can define the \emph{topological cone} \( C( A ):=\{ \lambda x \in E : x \in A, 0 \leq \lambda \leq 1\} \) over any subset \(A \) of \(E\). 
We denote by  \(E^*\) the dual space of \(E\), i.e., the space of all continuous linear functionals from \(E\) into \(\mathbb{R}\). For \(x\in E, p\in E^*\), we write \(p\cdot x\) for the value of \(p\) at \(x\).  We denote by \(E^*_+\) the dual cone of \(E_+\), i.e., \(E^*_+=\{p\in E^*:p\cdot x\geq 0\;\text{ for all } x\in E_+\}\). We denote by \(\mathcal{B}(E^*,w^*)\) the Borel \(\sigma\)-algebra of \(E^*\) generated by the weak* topology.
For any set \(A\) in \(E\), \(\text{cl}{ A }\) stands for the norm closure of \(A\) and \(\text{co}A\) for the convex hull of \(A\).

Let \((T,\mathcal{T},\mu)\) be a finite measure space. Denote by \(L^1(\mu)\) the the space of (\(\mu\)-equivalence classes of) real valued integrable functions on \(T\). Let \(\mathcal{T}_S=\{A\cap S|A\in \mathcal{T}\}\) be the sub-\(\sigma\)-algebra of \(\mathcal{T}\) restricted to \(S\in \mathcal{T}\) and \(\mu_S\) be a restriction of \(\mu\) to \(\mathcal{T}_S\). We write \(L^1_S(\mu)\) for the vector subspace of \(L^1(\mu)\) which consists of each function in \(L^1(\mu)\) restricted to \(S\).

\begin{Def}
A finite measure space \((T, \mathcal{T},\mu)\) is saturated if \(L^1_S(\mu)\) is non-separable for every \(S\in \mathcal{T}\) with \(\mu(S)>0\). 
\end{Def}

A saturated measure space is also called ``super-atomless'' in Podczeck \cite{pod2008}. Other equivalent definitions for saturation are available in the literature; see \cite{fjrdkler2002}, \cite{fremlin2012}, \cite{hk1984}, \cite{keislersun2009}, and \cite{pod2008}. 
As mentioned in Khan and Sagara \cite{ks2017}, ``a germinal notion of saturation already appeared in \cite{kakutani1944, maharam},'' and Kakutani \cite{kakutani1944} constructed a non-separable extension of the Lebesgue measure space which can be seen as a saturated extension of the Lebesgue  interval.\footnote{We are grateful to an anonymous referee for drawing our attention to Kakutani \cite{kakutani1944}.} 
Examples of saturated measure spaces include the product spaces of the form \([0,1]^\kappa\) and \(\{0, 1\}^\kappa\), where \(\kappa\) is an uncountable cardinal, \([0, 1]\) is endowed with the Lebesgue measure and \(\{0, 1\}\) the fair coin flipping measure. The cardinalities of these two examples are greater than the continuum.    
Podczeck \cite{pod2008} constructed a saturated  measure structure on the unit interval by ``enriching'' the Lebesgue \(\sigma\)-algebra. 
Thus, as is pointed out in \cite{pod2008}, when we have a saturated measure space of agents, the cardinality of the set of agents is not necessarily larger than the continuum. 
%As mensioned in Khan and Sagara \cite{ks2017}, ``a germinal notion of sautration already appeared in \cite{kakutani1944, maharam},'' and Kakutani \cite{kakutani1944} established an earlier version of the saturated extension of the Lebesgue unit interval.\footnote{We are grateful to an anonymous referee for drawing our attention to Kakutani \cite{kakutani1944}.} 

\section{The Model} \label{The Model}
The commodity space \(E\) is an ordered separable Banach Space with an interior point \(v\) in \(E_+\).\footnote{The examples of this space include \(C(K)\), the set of bounded continuous functions on a Hausdorff compact metric space \(K\) equipped with sup norm and a weakly compact subset of \(L_\infty(\mu)\) where \(\mu\) is a finite measure.} 
For the space of agents, we employ a complete probability space \((T, {\mathcal T}, \mu)\) which is saturated. 
Let \(X\) be a correspondence from \(T\) to \(E_+\). The consumption set of agent \(t\in T\) is given by \(X(t)\subset E_+\). The initial endowment of each agent is given by a Bochner integrable function \(e:T\to E\) where \( e(t)\in X(t)\) for all \(t\in T\). The aggregate initial endowment is \(\int_T e(t)d\mu\). Let \(Y\) be a correspondence from \(T\) to \(E\). The production set of agent \(t\) is given by \(Y(t)\subset E\). A price is \( p\in E^*_+\backslash \{0\}\). Let \(\Delta=\{p\in E^*_+\backslash\{0\}: p\cdot v =1\}\) be the price space. 
Then by Alaoglu's theorem, \(\Delta\) is weak* compact.
Let \(\mathcal{E} =[(T,{\mathcal T}, \mu), (X(t),Y(t),U_t,e(t))_{t\in T}]\) be a production economy where \(U_t:X(t)\times \Delta \to \mathbb{R}\) represents agent \(t\)'s utility function. We also write \(U(t,x,p) = U_t(x,p)\) for \(t\in T\), \(x\in X(t)\) and \( p\in \Delta\).
An allocation for \(\mathcal{E}\) is a Bochner integrable function \(f:T\to E_+\) such that \(f\in \mathcal{S}_X^1\) and a production plan is a Bochner integrable function \(g:T\to E\) such that \(g\in \mathcal{S}_Y^1\). 
 The budget set of agent \(t\) at a price \( p\in \Delta\) is \(B(t,p)=\{x\in X(t):p\cdot x \leq p \cdot e(t) + \text{max }p\cdot Y(t) \}\).

A competitive equilibrium for \(\mathcal{E}\) is a triple of a price \(p\), an allocation \(f\) and a production plan \(g\) such that 

\begin{enumerate}
	\item \(p\cdot f(t) \leq p\cdot e(t) + p\cdot g(t)\) for almost all \(t\in T\),
	\item \(\int_T f(t)d\mu\leq \int_T e(t)d\mu + \int_T g(t) d\mu\),
	\item for any \(x\in X(t)\), \(U_t(x,p)> U_t(f(t),p)\) implies that \(p\cdot x> p\cdot e(t) + p\cdot g(t)\) for almost all \(t\in T\),
	\item  \(p\cdot g(t) = \text{max } p\cdot Y(t)\) for almost all \(t\in T\).
\end{enumerate}

We assume that the production economy \(\mathcal{E}\) satisfies the following assumptions:

\begin{itemize}
\item[A.1] \(X(t)\) is non-empty, closed, convex, integrably bounded and weakly compact for all \(t\in T\).
          
\item[A.2] \(Y(t)\) is non-empty, closed, integrably bounded and weakly compact for all \(t\in T\).       

\item[A.3] There is an element \(\eta(t)\in X(t)\) such that 
           \(e(t) - \eta(t)\) is in the norm interior of \(E_+\) for all \(t\in T\). 
		   
\item[A.4] (i) \(U_t: X(t)\times \Delta \to \mathbb{R}\) is a jointly continuous  function on \(X(t)\times \Delta \) for all \(t\in T\) where \(X(t)\) is equipped with the weak topology and \(\Delta\)  with the weak* topology. 
 (ii) If \(x\in X(t)\) is a satiation point for \(U_t(\cdot,p)\), then \(x\geq e(t) + y\) for any \(y\in Y(t)\); if \(x\in X(t)\) is not a satiation point for \(U_t(\cdot,p)\), then \(x\) belongs to the weak closure of the set \(\{x'\in X(t): U_t(x',p)> U_t(x,p)\}\) for every \(p\in \Delta\).

\item[A.5] \(U\) is jointly measurable with respect to \(\mathcal{T}\otimes \mathcal{B}(E,w)\otimes \mathcal{B}(E^*,w^*)\). 
%on \(G_X\times \Delta \) where \( G_X=\{(t,x)\in T\times E: x\in X(t)\}\).
      
\item[A.6] the correspondence \(X:T\to 2^{E}\) has a measurable graph,
      i.e.,
      \(G_X\in {\mathcal T}\otimes {\mathcal B}(E,w) \).
             
\item[A.7] the correspondence \(Y:T\to 2^{E}\) has a measurable graph,
      i.e.,
      \(G_Y=\{(t,y)\in T\times E: y\in Y(t)\}\in {\mathcal T}\otimes {\mathcal B}(E,w) \).
       
\item[A.8] \(\mathbf{0}\in Y(t)\) for all \(t\in T\) where \(\mathbf{0}\) is the zero vector of \(E\).           
\end{itemize}

%\noindent\textbf{Remark} 
In A.1 and A.2, we assume that both the consumption sets and the production sets are weakly compact. Although these assumptions seem strong, the  weakly compact consumption set assumption was employed in Khan and Yannelis \cite{ky91}, Podczeck \cite{pod1997} and Khan and Sagara \cite{ks2017}.\footnote{Since these three papers dealt with exchange economies, the production sets are irrelevant.}
With this assumption, Khan and Yannelis \cite{ky91} made the set of feasible allocations weakly compact, Podczeck \cite{pod1997} obtained a weakly compact-valued demand correspondence, and Khan and Sagara \cite{ks2017} were able to invoke the exact Fatou's lemma for an infinite dimensional separable Banach space.
We use the weak compactness assumption to apply the exact Fatou's lemma for our results. 
%In A.3, we assume an additional condition that \(\eta(t)\) lies in the norm closed convex hull of the norm compact set \(K\). This extra assumption is needed for the approximation approach employed for the existence proof. 
A.4 (ii) is imposed in \cite{ks2017, lee, pod1997} and the second part plays a similar role to the ``local nonsatiation'' assumption.

%We can replace the assumption of the weak compactness of \(X(t)\)  and \(Y(t)\)  with  the weak compactness of the relevant sets of allocations and production plans. We will discuss this later in Section \ref{conclusion}. 

\section{Results}
The following theorem is our main result:

\begin{Mthm}
\label{mainresult}
Suppose that the production economy \(\mathcal{E}\) satisfies A.1-A.8. Then there exists a competitive equilibrium for \(\mathcal{E}\).
\end{Mthm}

The proof of the Main Theorem is provided in Section \ref{proof_main}. 
As is well known, for \(x\in E\) and \(p\in \Delta\) the bilinear map \((p,x)\mapsto p\cdot x\) is not jointly continuous if \(E\) is equipped with the weak topology and \(\Delta\) with the weak* topology. But 
when \(E\) is equipped with the norm topology, the bilinear map is continuous.\footnote{See Aliprantis and Border \cite{ab}  pp. 241-242.} 
To utilize this property, we modify A.1 and A.2:

\begin{itemize}
   \item[A.1\('\)] \(X(t)\) is non-empty, closed, convex, integrably bounded and norm compact for all \(t\in T\).
   \item[A.2\('\)] \(Y(t)\) is non-empty, closed, integrably bounded and norm compact for all \(t\in T\). 
\end{itemize}

We now introduce the following auxiliary result:

\begin{auxthm}
  \label{auxresult}
  Suppose that the production economy \( \mathcal{E} \) satisfies A.1\('\), A.2\('\) and  A.3-A.8. Then there exists a competitive equilibrium for \( \mathcal{E} \).
\end{auxthm}

We provide the proof of the Auxiliary Theorem in Section \ref{proof_aux}. We follow the idea of \cite{gsw} for the proof of the Auxiliary Theorem. Greenberg et al. \cite{gsw} applied Debreu's \cite{debreu} social equilibrium result to prove the existence of a competitive equilibrium. 
 
We introduce a 3-person game \(\Gamma\) which consists of three sets \(K_1, K_2, K_3\), and three correspondence \(A_1:K_2\times K_3\to 2^{K_1}, A_2:K_1\times K_3\to 2^{K_2}, A_3:K_1\times K_2\to 2^{K_3}\), and three functions \(u_i:K_1\times K_2\times K_3 \to \mathbb{R}\) \((i=1,2,3)\). 
Let \(I=\{1,2,3\}\) and let \(K_{-i}= \Pi_{j\neq i} K_j\;(i,j\in I) \). We write \(k_{i}\) for an element in \(K_{i}\) and \(k_{-i}\) for \(K_{-i}\).

An equilibrium for \(\Gamma\) is \(k^*\in K_1\times K_2\times K_3 \)  such that for all \(i\in I\)
	\begin{equation}
        k_i^* \in \text{argmax }_{k_i\in A_i(k_{-i}^*)  }  u_i(k_i,k_{-i}^*).
	\end{equation}

The following lemma is Debreu's \cite{debreu} social equilibrium theorem for a Banach space.

\begin{lma}\label{lma_debreu}
  Let \(\Gamma\) be a 3-person game and suppose \(\Gamma\) satisfies, for \(i\in I\), 
  \begin{itemize}
  \item[(i)] \(K_i\) is a non-empty, convex, and compact subset of a Banach space;
  \item[(ii)] \(A_i\) is continuous, non-empty, closed and convex valued;
  \item[(iii)] \(u_i\) is continuous and quasi-concave on \(K_i\). 
  \end{itemize}
  Then \(\Gamma\) has an equilibrium.
\end{lma}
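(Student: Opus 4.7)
The plan is to reduce the existence of an equilibrium for $\Gamma$ to a fixed point of a suitable best-response correspondence, applying the Fan-Glicksberg fixed point theorem (the Banach space analogue of Kakutani's theorem) to a product of three argmax correspondences.

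First I would define, for each $i\in I$, the best-response correspondence $\beta_i:K_{-i}\to 2^{K_i}$ by
\[
\beta_i(k_{-i})=\mathrm{argmax}_{k_i\in A_i(k_{-i})} u_i(k_i,k_{-i}),
\]
and set $\beta:K\to 2^{K}$ by $\beta(k)=\beta_1(k_{-1})\times\beta_2(k_{-2})\times\beta_3(k_{-3})$. Any fixed point $k^*\in\beta(k^*)$ is, by construction, an equilibrium of $\Gamma$ in the sense of the preceding display. The product space $K=K_1\times K_2\times K_3$ is a non-empty, convex, compact subset of a Banach space by (i), so the setting is appropriate for the Fan-Glicksberg theorem, which only requires this ambient convex compactness and an upper hemicontinuous correspondence with non-empty, convex, closed values.

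Next I would verify these properties of $\beta$ one player at a time. By (ii) the constraint set $A_i(k_{-i})$ is non-empty and closed, hence compact as a closed subset of $K_i$; by (iii) the section $u_i(\cdot,k_{-i})$ is continuous, so the Weierstrass theorem gives non-emptiness of $\beta_i(k_{-i})$. Convexity of $\beta_i(k_{-i})$ follows from the quasi-concavity of $u_i(\cdot,k_{-i})$ combined with convex-valuedness of $A_i$ from (ii): the argmax of a quasi-concave function over a convex set is convex. Closedness of $\beta_i(k_{-i})$ follows because $\{k_i\in A_i(k_{-i}): u_i(k_i,k_{-i})\geq \max\}$ is the intersection of the closed set $A_i(k_{-i})$ with an upper level set of the continuous function $u_i(\cdot,k_{-i})$. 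Hence $\beta(k)$ is non-empty, convex, and closed.

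For upper hemicontinuity of each $\beta_i$, I would invoke Berge's maximum theorem: $A_i$ is continuous (both upper and lower hemicontinuous) and non-empty, compact-valued by (ii), while $u_i$ is jointly continuous on $K$ by (iii), so the argmax correspondence $\beta_i$ is upper hemicontinuous. Upper hemicontinuity of $\beta$ then follows from the product structure. Since the values of $\beta$ are closed subsets of the compact set $K$, upper hemicontinuity is equivalent to $\beta$ having a closed graph in $K\times K$, which is precisely the hypothesis needed for Fan-Glicksberg.

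Applying the Fan-Glicksberg fixed point theorem to $\beta:K\to 2^{K}$ yields $k^*\in K$ with $k^*\in \beta(k^*)$; unpacking the product, this means $k_i^*\in\mathrm{argmax}_{k_i\in A_i(k_{-i}^*)} u_i(k_i,k_{-i}^*)$ for every $i\in I$, which is the required equilibrium. The only step requiring any real care is establishing upper hemicontinuity of $\beta_i$ in the Banach space setting; everything else is a direct verification. The main subtlety relative to the finite-dimensional proof of Debreu is simply that one must use the Fan-Glicksberg fixed point theorem in place of Kakutani's, but since $K$ is assumed norm compact and convex, this substitution is painless.
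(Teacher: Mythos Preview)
Your proposal is correct and is precisely the ``standard argument'' the paper defers to: the paper's own proof consists of the single sentence ``By applying a standard argument to our Banach space, we can have the result,'' and what you have written---Berge's maximum theorem to obtain an upper hemicontinuous, non-empty, closed, convex-valued best-response correspondence, followed by the Fan--Glicksberg fixed point theorem in place of Kakutani's---is exactly that standard argument spelled out in detail.
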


\begin{proof}
By applying a standard argument to our Banach space, we can have the result.
\end{proof}

Based on Lemma \ref{lma_debreu}, we will prove the Auxiliary Theorem. 
Toward this end, we specify our \(\Gamma\). Without loss of generality, we assume the values of \(U_t\) are contained in \([0,1]\) for all \(t\in T\). 
Let \(K_1= \Delta\), \(K_2=\int_T X(t)d\mu \times [0,1]\), and \(K_3=\int_T Y(t)d\mu\).  
For \( p\in K_1, (x,\alpha)\in K_2\) and \(y\in K_3\), let
\(A_1((x,\alpha),y)  = K_1 \), 
\(A_2(p,y)  = \{(x,\alpha)\in K_2: \exists f\in \mathcal{S}_X^1 \text{ such that } x=\int_T f(t)d\mu, f(t)\in B(t,p) \text{ a.e }t\in T, \alpha = \int_T U_t(f(t),p)d\mu\}\),
\(A_3(p,(x,\alpha))  = K_3,\)
and 
\begin{equation}
 u_1(p,(x,\alpha),y)  = p\cdot (x -\int_T e(t)d\mu -y),\;\;
u_2(p,(x,\alpha),y)  = \alpha, \;\;
 u_3(p,(x,\alpha),y)  = p\cdot y.
\end{equation}

\begin{lma} 
\label{int_sy}
Under A.1\('\) and A.2\('\), \(\int_T X(t) d\mu\) and \(\int_T Y(t) d\mu\) are norm compact and convex. 
\end{lma}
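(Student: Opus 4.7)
My plan is to reduce both assertions to a single statement about a generic correspondence \(F:T\to 2^E\) that is non-empty, closed, norm compact-valued, measurable, and integrably bounded. The correspondence \(X\) has these properties by A.1\('\) and A.6, and \(Y\) by A.2\('\) and A.7, so it suffices to prove that \(\int_T F(t)\,d\mu\) is both convex and norm compact.

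For convexity, I would directly invoke the Sun-Yannelis/Podczeck convexity theorem \cite{sy, pod2008} for saturated measure spaces: since \(E\) is a separable Banach space and \((T,\mathcal{T},\mu)\) is saturated, the Bochner integral of any measurable \(E\)-valued correspondence is convex. Applied to \(F\), this yields convexity of \(\int_T F(t)\,d\mu\), whether or not \(F(t)\) itself is convex (important for \(Y\), which is not assumed convex).

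For norm compactness, let \(\{z_n\}\) be an arbitrary sequence in \(\int_T F(t)\,d\mu\) and write \(z_n=\int_T f_n\,d\mu\) with \(f_n\in\mathcal{S}^1_F\). Integrable boundedness gives \(\norm{f_n(t)}\le h(t)\) a.e., so the family \(\{f_n\}\) is uniformly integrable in \(L^1(\mu,E)\); and for a.e.\ \(t\) the section \(\{f_n(t)\}_n\) lies in the norm compact set \(F(t)\). Under these two conditions a standard compactness criterion for \(L^1(\mu,E)\) (uniform integrability together with pointwise relative norm compactness of sections, in the separable Banach setting) furnishes a subsequence \(\{f_{n_k}\}\) converging in the \(L^1\)-norm to some \(f\in L^1(\mu,E)\). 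Passing to a further subsequence gives pointwise a.e.\ convergence, and closedness of \(F(t)\) forces \(f(t)\in F(t)\) a.e., i.e.\ \(f\in\mathcal{S}^1_F\). Continuity of the Bochner integral as a linear map from \(L^1(\mu,E)\) to \(E\) then gives \(z_{n_k}\to \int_T f\,d\mu\in\int_T F(t)\,d\mu\) in norm, establishing simultaneously relative compactness and closure of the set.

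The main obstacle I anticipate is the \(L^1\)-norm compactness extraction itself. Under the weaker A.1 and A.2, where \(F(t)\) is only weakly compact, Diestel's theorem supplies merely weak compactness of \(\{f_n\}\) in \(L^1(\mu,E)\), and the resulting weak convergence of \(z_{n_k}\) in \(E\) is generally insufficient in infinite dimensions. The strengthened hypotheses A.1\('\) and A.2\('\) are tailored precisely to permit the upgrade to norm convergence, and pinpointing and applying the correct compactness criterion for \(L^1(\mu,E)\) is the technical heart of the argument.
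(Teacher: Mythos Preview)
Your convexity argument is essentially the paper's own proof: the paper just cites Proposition~1 of Sun--Yannelis \cite{sy}, which on a saturated measure space gives convexity (and compactness) of the Bochner integral of such a correspondence. For that half you are aligned with the paper.

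The norm-compactness argument, however, has a genuine gap. The ``standard compactness criterion for \(L^1(\mu,E)\)'' that you invoke---uniform integrability together with pointwise sections lying in norm-compact sets---does \emph{not} yield relative norm compactness of \(\{f_n\}\) in \(L^1(\mu,E)\). Already for \(E=\mathbb{R}\), \(T=[0,1]\), \(F(t)=[0,1]\), the Rademacher-type functions \(f_n(t)=\tfrac{1}{2}(1+r_n(t))\) satisfy both hypotheses (they are uniformly bounded by \(1\) and take values in the compact set \([0,1]\)), yet no subsequence converges in \(L^1\)-norm: they converge only weakly to the constant \(1/2\). So the extraction step ``furnishes a subsequence \(\{f_{n_k}\}\) converging in the \(L^1\)-norm'' fails as stated. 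What this example also shows is that the failure is at the level of the selections, not necessarily at the level of the integrals (here \(\int f_n=1/2\) for all \(n\)); thus your strategy of proving compactness of \(\int_T F\) via \(L^1\)-norm compactness of \(\mathcal{S}^1_F\) is strictly stronger than what is needed, and in fact too strong to be true.

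A correct route is either to cite the Sun--Yannelis result directly, as the paper does, or to argue for \emph{weak} compactness of \(\mathcal{S}^1_F\) in \(L^1(\mu,E)\) (this much is true under your hypotheses, e.g.\ by Diestel's criterion), deduce weak compactness of \(\int_T F\) in \(E\), and then separately establish norm total boundedness of \(\int_T F\) by approximating \(F\) in the Hausdorff sense by simple correspondences with norm-compact values---whose integrals are finite Minkowski sums of compact sets and hence norm compact. The upgrade from weak to norm compactness of \(\int_T F\) genuinely uses the norm compactness of the values \(F(t)\), but it does not go through \(L^1\)-norm compactness of selections.
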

\begin{proof}
	By appealing to Proposition 1 in Sun and Yannelis \cite{sy}, we have the results.
\end{proof}

\begin{lma}
   \label{Bp}
	\(B(t,p)\) is a non-empty and continuous correspondence in \(p\) 
	when \(X(t)\)  and \(Y(t)\) are norm compact and \(\Delta\) is weak* compact.
\end{lma}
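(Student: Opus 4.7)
The plan is to establish three separate facts about $B(t,\cdot)\colon\Delta\to 2^{X(t)}$: non-emptiness, upper semicontinuity, and lower semicontinuity, working throughout with the weak* topology on $\Delta$ and the norm topology on $X(t)$. The central tool is the footnoted fact that the evaluation $(p,x)\mapsto p\cdot x$ is jointly continuous on $\Delta\times E$ when $E$ carries the norm topology, which combined with the norm compactness of $X(t)$ and $Y(t)$ (A.1$'$ and A.2$'$) does most of the work. Non-emptiness is immediate: A.8 gives $\mathbf{0}\in Y(t)$, so $\max p\cdot Y(t)\geq 0$, and since $e(t)\in X(t)$ by hypothesis on the endowment, $e(t)\in B(t,p)$ for every $p\in\Delta$.

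For the upper semicontinuity step I would first show that the value function $\psi(p):=\max\{p\cdot y:y\in Y(t)\}$ is continuous on $\Delta$. Given $p_n\to p$ weak*, the sequence $\{p_n\}$ is norm bounded by the uniform boundedness principle; picking maximizers $y_n\in Y(t)$ and extracting a norm-convergent subsequence $y_{n_k}\to y^*$ via A.2$'$, the joint continuity yields $\psi(p_{n_k})=p_{n_k}\cdot y_{n_k}\to p\cdot y^*\leq\psi(p)$, giving upper semicontinuity of $\psi$; conversely, testing against any maximizer $y^*$ for $p$ gives $\psi(p_n)\geq p_n\cdot y^*\to p\cdot y^*=\psi(p)$. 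With $\psi$ continuous, the upper semicontinuity of $B(t,\cdot)$ reduces to checking closed graph: for $p_n\to p$ weak* and $x_n\in B(t,p_n)$ with $x_n\to x$ in norm (extracted using the norm compactness of $X(t)$), joint continuity gives $p_n\cdot x_n\to p\cdot x$ and $p_n\cdot e(t)\to p\cdot e(t)$, so the budget inequality passes to the limit. Since the target $X(t)$ is norm compact, closed graph plus compact range yields upper semicontinuity.

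For lower semicontinuity I would invoke the cheaper-point argument based on A.3. Because $e(t)-\eta(t)$ lies in the norm interior of $E_+$, there is $r>0$ with $e(t)-\eta(t)+rB_E\subset E_+$, which forces $p\cdot(e(t)-\eta(t))\geq r\|p\|>0$ for every $p\in\Delta$, so $p\cdot\eta(t)<p\cdot e(t)+\psi(p)$. Given $p_n\to p$ weak* and $x\in B(t,p)$, set $x_\lambda=\lambda\eta(t)+(1-\lambda)x\in X(t)$, using convexity from A.1$'$; then $p\cdot x_\lambda<p\cdot e(t)+\psi(p)$ strictly for every $\lambda\in(0,1]$, and by the already-established continuity of $p\mapsto p\cdot e(t)+\psi(p)-p\cdot x_\lambda$, this strict inequality survives for $p_n$ with $n$ large, so $x_\lambda\in B(t,p_n)$ eventually. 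A diagonal choice $\lambda_n\downarrow 0$ slowly enough produces $x_n:=x_{\lambda_n}\in B(t,p_n)$ with $x_n\to x$ in norm, which is the sequential criterion for lower semicontinuity into the metrizable norm-compact range.

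The main obstacle I anticipate is the lower semicontinuity step: upper semicontinuity is just a passage-to-the-limit exercise once $\psi$ is continuous, but lower semicontinuity genuinely requires the geometric input from A.3 (strict interiority of $e(t)-\eta(t)$ in $E_+$) together with convexity from A.1$'$ to guarantee both that $\eta(t)$ strictly satisfies the budget constraint at every $p\in\Delta$ and that the convex combinations $x_\lambda$ remain admissible; without either ingredient the standard cheaper-point perturbation would fail and one could not approach an arbitrary $x\in B(t,p)$ from inside nearby budget sets.
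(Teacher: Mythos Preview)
Your proposal is correct and follows essentially the same route as the paper: non-emptiness via A.8 and $e(t)\in X(t)$, continuity of the profit function $\psi_t(p)=\max p\cdot Y(t)$, and then the standard cheaper-point argument for continuity of the budget correspondence. The only cosmetic differences are that the paper invokes Berge's maximum theorem for the continuity of $\psi_t$ where you argue it directly from joint continuity and norm compactness of $Y(t)$, and the paper summarizes the upper/lower semicontinuity of $B(t,\cdot)$ as ``a standard argument'' once $z_t(p)=p\cdot e(t)+\psi_t(p)$ is known to be continuous and strictly positive, whereas you spell out that argument in full (closed graph for USC; the convex-combination perturbation toward $\eta(t)$ from A.3 for LSC).
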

\begin{proof}
By A.8, it is clear that \(\text{max }p\cdot Y(t)\geq 0\). Then \(\eta(t)\in B(t,p) \) for any \( p\in \Delta\). Therefore, \(B(t,p)\) is non-empty.

Let \(\psi_t:\Delta \to \mathbb{R}\) be a function defined by 
	 \(
 \psi_t(p)=\text{max }_{y\in Y(t)}\;p\cdot y.
     \)
By Berge's theorem, \(\psi_t(p) \) is continuous in \( p\). 
We define a function \(z_t:\Delta\to\mathbb{R}\) by 
 \begin{equation}
  z_t(p) = p\cdot e(t) + \text{max } p\cdot Y(t) = p\cdot e(t) + \psi_t(p).
 \end{equation}
Clearly, \(z_t(p) \) is continuous in \(p\). The budget correspondence can be rewritten as \(B(t,p) = \{x\in X(t): p\cdot x\leq z_t(p)\}\). By A.3 and A.8, \( z_t(p) > 0\) for all \( p\in \Delta\). Then a standard argument can be adopted to show that \(B(t,p)\) is continuous in \(p\).
\end{proof}

The following is the exact Fatou's lemma for Banach spaces proved by Khan and Sagara \cite{ks2014}. 
\begin{lma}[Theorem 3.5 in \cite{ks2014}] %Khan and Sagara (2014)]
\label{fatou}
Let \( (T,\mathcal{T},\mu) \) be a complete saturated finite measure space and \(E\) be a Banach space. If \( \{f_n\}\) is a well-dominated sequence in \(L^1(\mu,E) \), then there exists \(f\in L^1(\mu,E) \) such that
\begin{enumerate}
 \item[(i)] \(f(t)\in w\text{-Ls}\{f_n(t)\}\) a.e. \(t\in T\),
 \item[(ii)] \(\int f d\mu \in w\text{-Ls}\{\int f_n d\mu\}\).
\end{enumerate}
\end{lma}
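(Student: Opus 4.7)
The plan is to split the lemma into a soft compactness half that produces a candidate limit $f$ and delivers clause (ii), and a hard pointwise identification that establishes clause (i); the saturation hypothesis enters only in the latter.

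For the soft half, observe that $\phi$ is integrably bounded with weakly compact values, so $\mathcal{S}_\phi^1$ is norm-bounded and uniformly integrable in $L^1(\mu,E)$. Together with pointwise weak compactness of $\phi(t)$, a standard weak compactness criterion for Bochner spaces makes $\mathcal{S}_\phi^1$ relatively weakly compact in $L^1(\mu,E)$. Since each $f_n\in\mathcal{S}_\phi^1$, the Eberlein--Smulian theorem supplies a subsequence $\{f_{n_k}\}$ converging weakly in $L^1(\mu,E)$ to some $f\in\mathcal{S}_\phi^1$. The Bochner integral $g\mapsto\int_T g\,d\mu$ is a bounded linear operator $L^1(\mu,E)\to E$, hence weakly continuous, so $\int_T f_{n_k}\,d\mu$ converges weakly in $E$ to $\int_T f\,d\mu$; this is exactly clause (ii).

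For clause (i), set $G(t):=w\text{-Ls}\{f_n(t)\}$; $G$ is measurable with weakly closed values by Kuratowski--Painleve arguments, and well-dominatedness gives $G(t)\subset\phi(t)$ almost everywhere. What must be shown is $f(t)\in G(t)$ a.e. Mazur's lemma first supplies convex combinations of the $f_{n_k}$ that converge to $f$ in $L^1$ norm and, after a further subsequence, pointwise a.e. In finite dimensions this alone places $f(t)$ in the weak upper limit; in a general separable Banach space it does not, because the weak topology on $E$ is not metrizable on bounded sets and Mazur combinations cannot be directly identified with selections of $G$. Saturation closes the gap: by Podczeck \cite{pod2008} and Sun--Yannelis \cite{sy} the Bochner integral $\int_T G(t)\,d\mu$ is convex and weakly compact on a saturated space, and the Khan--Sagara Lyapunov theorem \cite{ks2013} ensures every point of this integral is attained as $\int_T g\,d\mu$ for some measurable selection $g$ of $G$. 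Exhausting $T$ by sufficiently fine measurable partitions and matching integrals on each piece then deconvexifies the Mazur combinations into genuine selections of $G$, forcing $f(t)\in G(t)$ a.e.

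The principal obstacle is precisely this final deconvexification. Without saturation the weak $L^1$ limit can sit strictly outside the integral of the pointwise weak upper limit, and the diagonal extraction that rescues the finite-dimensional argument has no analogue because the weak topology of $E$ is too coarse to metrize bounded sets. Saturation is exactly the property that reproduces the Lyapunov convexity phenomenon for vector measures into a separable Banach space, and it is what makes both the convexity of $\int_T G\,d\mu$ and the selection identification above possible; everything else in the proof is routine once that pointwise membership is secured.
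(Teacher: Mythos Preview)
The paper does not prove this lemma at all: it is quoted verbatim as Theorem~3.5 of Khan--Sagara \cite{ks2014} and used as a black box. There is therefore no ``paper's own proof'' to compare against; the relevant question is whether your sketch stands on its own.

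Your treatment of clause~(ii) is fine: well-domination gives relative weak compactness of $\mathcal{S}_\phi^1$ in $L^1(\mu,E)$ (this is the Diestel--Ruess--Schachermayer / \"Ulger criterion), Eberlein--\v Smulian extracts a weakly convergent subsequence, and weak-to-weak continuity of the integral operator delivers $\int f\,d\mu\in w\text{-Ls}\{\int f_n\,d\mu\}$.

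Clause~(i) is where the argument breaks down. Weak convergence $f_{n_k}\rightharpoonup f$ in $L^1(\mu,E)$ carries no pointwise information in general, and your proposed remedy---Mazur combinations followed by ``exhausting $T$ by sufficiently fine measurable partitions and matching integrals on each piece''---is not a proof but a hope. Concretely: Lyapunov convexity of $\int_T G\,d\mu$ tells you that any point of that integral is attained by some selection of $G$, but it does not tell you that the \emph{particular} weak $L^1$ limit $f$ you already have is such a selection. Matching integrals over a partition identifies $f$ with a selection of $G$ only up to sets where they have the same conditional expectation, which is far from pointwise equality; refining the partition does not close this gap without an additional argument (e.g.\ a martingale or density-point argument) that you have not supplied. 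The actual Khan--Sagara proof proceeds differently: it uses a Young-measure / purification characterization of weak $L^1$ limits on saturated spaces to produce directly a function $f$ whose values lie in the pointwise weak upper limit, rather than trying to retrofit pointwise membership onto a limit obtained by abstract compactness. Your outline gestures at the right ingredients (saturation, Lyapunov, convexity of the integral) but does not assemble them into the step that actually requires work.
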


\begin{lma}
\label{A_property}
Under A.1\('\) and A.2\('\), \(A_i\) is continuous, non-empty, closed and convex valued for \(i=1,2,3\).  
\end{lma}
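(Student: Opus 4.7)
The plan is to dispose of $A_1$ and $A_3$ first, since they are constant correspondences, and then concentrate the work on $A_2$. For $A_1((x,\alpha),y)\equiv K_1=\Delta$ and $A_3(p,(x,\alpha))\equiv K_3=\int_T Y(t)\,d\mu$, every property in the statement is immediate: $\Delta$ is weak* compact by Alaoglu, convex, and non-empty; $K_3$ is norm compact and convex by Lemma~\ref{int_sy}; and any constant correspondence is trivially continuous. I also note that $A_2(p,y)$ does not depend on $y$, so it suffices to study the map $p\mapsto A_2(p)$.

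For $A_2$, non-emptiness follows because A.8 gives $\max p\cdot Y(t)\geq 0$, so $e(t)\in B(t,p)$ and $\big(\int_T e\,d\mu,\int_T U_t(e(t),p)\,d\mu\big)$ is a witness. For convexity, I plan to view $A_2(p)$ as the Bochner integral $\int_T \Psi_p(t)\,d\mu$ of the correspondence $\Psi_p:T\to 2^{E\times\mathbb{R}}$ defined by $\Psi_p(t)=\{(x,U_t(x,p)):x\in B(t,p)\}$; since $E\times\mathbb{R}$ is a separable Banach space and $(T,\mathcal{T},\mu)$ is saturated, Proposition~1 of Sun--Yannelis \cite{sy} (as already invoked in Lemma~\ref{int_sy}) delivers convexity.

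Closedness of $A_2(p,y)$ and upper semicontinuity of $A_2$ in $(p,y)$ I propose to handle in one stroke by a Fatou argument. Given $(p_n,y_n)\to(p,y)$ in $K_1\times K_3$ and $(x_n,\alpha_n)\in A_2(p_n,y_n)$ with $(x_n,\alpha_n)\to(x,\alpha)$, pick witnesses $f_n$ and apply Lemma~\ref{fatou} to the $E\times\mathbb{R}$-valued sequence $h_n(t)=(f_n(t),U_t(f_n(t),p_n))$, which is well-dominated by A.1\('\) together with boundedness of $U_t$. The resulting $h=(f,\beta)\in L^1(\mu,E\times\mathbb{R})$ satisfies $\int h\,d\mu=(x,\alpha)$ and $h(t)\in w\text{-Ls}\{h_n(t)\}$ a.e.\ Because $X(t)$ is norm compact by A.1\('\), weak convergence on $X(t)$ is norm convergence, so a subsequence $f_{n_k}(t)\to f(t)$ in norm; joint continuity of $U_t$ (A.4, using the norm topology on the compact $X(t)$ and weak* on $\Delta$) then gives $\beta(t)=U_t(f(t),p)$, and upper semicontinuity of $B(t,\cdot)$ (Lemma~\ref{Bp}) gives $f(t)\in B(t,p)$. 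Hence $(x,\alpha)\in A_2(p,y)$.

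The main obstacle is lower semicontinuity of $A_2$, which requires a cheaper-point construction driven by A.3. Given $(x,\alpha)\in A_2(p,y)$ with witness $f$ and $(p_n,y_n)\to(p,y)$, apply Aumann's selection theorem to the graph-measurable correspondence $t\mapsto\{z\in X(t):e(t)-z\in\mathrm{int}\,E_+\}$ (non-empty by A.3) to obtain a measurable $\eta$ with $e(t)-\eta(t)$ in $\mathrm{int}\,E_+$, so $p_n\cdot\eta(t)<p_n\cdot e(t)\leq z_t(p_n)$ for every $p_n\in\Delta$. Define $\lambda_n(t)\in[0,1]$ to be the largest scalar with $\lambda_n(t)\,p_n\cdot f(t)+(1-\lambda_n(t))\,p_n\cdot\eta(t)\leq z_t(p_n)$, and set $g_n(t)=\lambda_n(t)f(t)+(1-\lambda_n(t))\eta(t)$, which lies in $X(t)\cap B(t,p_n)$ by convexity of $X(t)$. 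Continuity of $z_t$ (Berge, as in Lemma~\ref{Bp}) and the inequality $p\cdot f(t)\leq z_t(p)$ force $\lambda_n(t)\to 1$ a.e., so $g_n\to f$ in norm a.e.; dominated convergence together with joint continuity of $U_t$ yields $\big(\int_T g_n\,d\mu,\int_T U_t(g_n(t),p_n)\,d\mu\big)\to(x,\alpha)$, which is the required selection of $A_2(p_n,y_n)$. The delicate point in this step is producing the measurable cheaper-point $\eta$ from the pointwise hypothesis A.3 and verifying the uniform strict budget inequality as $p_n$ varies in $\Delta$.
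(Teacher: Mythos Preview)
Your argument is correct, but it diverges from the paper's proof in two substantive places. For the convexity of \(A_2(p,y)\) you identify \(A_2(p)\) with the Bochner integral \(\int_T \Psi_p(t)\,d\mu\) of the graph correspondence \(\Psi_p(t)=\{(x,U_t(x,p)):x\in B(t,p)\}\) and invoke Sun--Yannelis directly; the paper instead applies Lyapunov's convexity theorem for vector measures (Theorem~4.1 in \cite{ks2013}) to the measure \(\nu(S)=\bigl(\int_S h\,d\mu,\int_S h'\,d\mu\bigr)\) and builds the convex combination by splicing \(f_\lambda=f\chi_S+f'\chi_{T\setminus S}\). Both rest on saturation; yours is shorter, while the paper's makes the role of Lyapunov explicit. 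For lower semicontinuity you run the classical cheaper-point construction \(g_n=\lambda_n f+(1-\lambda_n)\eta\) with \(\eta\) obtained from A.3 via Aumann selection, and then push \(\lambda_n\to 1\) using the continuity of \(z_t\); the paper instead takes \(f_n(t)\) to be the metric projection of \(f(t)\) onto the norm compact convex set \(B(t,p_n)\), verifies measurability of this projection through Castaing's representation theorem, and shows \(f_n(t)\to f(t)\) by a \(b-\delta v\) contradiction argument that uses only the interior point \(v\) rather than A.3. Your route avoids the Castaing machinery at the cost of the extra measurable selection \(\eta\); the paper's route is heavier on measurability bookkeeping but does not need A.3 at this step. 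Your treatment of upper semicontinuity---applying Lemma~\ref{fatou} to the \(E\times\mathbb{R}\)-valued pair \(h_n=(f_n,U_t(f_n,p_n))\) rather than to \(f_n\) alone---is a minor but tidy repackaging of what the paper does in two separate passes.
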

\begin{proof}
%\subsection{Proof of Lemma \ref{A_property}}
We adopt the idea of the proof from \cite{gsw}. 
It is clear that \(K_1=\Delta\) is non-empty and convex. By Alaoglu's theorem, it is weak* compact and thus, weak* closed.  It follows that \(A_1\) is non-empty, closed and convex valued. 
From A.8, \(\mathbf{0}\in \int_T Y(t)d\mu\) and thus \(K_3=\int_T Y(t)d\mu\) is non-empty. By Lemma \ref{int_sy}, \(\int_T Y(t) d\mu\) is convex and norm compact and thus, norm closed. Hence, \(A_3\) is non-empty, closed and convex valued. Clearly, \(A_1\) and \(A_3\) are continuous.
  
We now turn to \( A_2 \). Since the initial endowment map \(e(t)\in B(t,p) \), \(A_2\) is non-empty. 
Note \(\int_T e(t)d\mu\in \int_T X(t)d\mu\) for all \(t\in T\) and \(\int_T U_t(e(t),p)d\mu\in [0,1]\). Thus, \(K_2=\int_T X(t)d\mu \times [0,1]\) is non-empty. By Lemma \ref{int_sy}, \(\int_T X(t)d\mu\) is norm compact and convex. It follows that \(K_2\) is compact and convex.  

We show the value of \( A_2\) is closed. We need to show \( (x,\alpha) \in A_2(p,y) \) when \(x_n\to x\) in norm and \(\alpha_n\to \alpha\) such that  \((x_n,\alpha_n)\in A_2(p,y)\) for all \(n\).  
Then there exists a sequence \(\{f_n\}\subset \mathcal{S}_X^1\) such that \(x_n = \int_T f_n(t)d\mu\) and \(\alpha_n = \int_T U_t(f_n(t),p)d\mu\) with \(f_n(t)\in B(t,p)\) for all \(n\). %Clearly, \(\{f_n\}\) is well-dominated. 
By virtue of A.1\('\),  \(\{f_n\}\) is well-dominated.
We can appeal to Lemma \ref{fatou} to have \(f\in L^1(\mu,E)\) such that \(f(t)\in X(t), \) \( f(t) \in  w\text{-Ls} \{ f_n(t) \}\) for a.e. \( t \in T, \) and \(\int_T f(t)d\mu\in w\text{-Ls}\{\int f_n d\mu\}\). Thus we can extract a subsequence from \(\{f_n \}\) (which we do not relabel) such that \( f_n(t) \to f(t) \) weakly for a.e. \( t \in T \) and  \(\int_T f_n(t)d\mu \to \int_T f(t)d\mu\) weakly. Since \( B(t,p)\) is norm compact and \(f_n(t)\in B(t,p)\) for all \(n\), it follows \(f(t)\in B(t,p)\). 
The weak limit \( \int_T f(t)d\mu\) of the subsequence of \( \{ x_n \} \) must be equal to the norm limit \( x\) of the whole sequence \( \{ x_n\}. \)
Because \( U_t(\cdot,p)\) is weakly continuous, \( U_t(f_n(t),p) \to U_t(f(t),p)\)  for a.e. \( t \in T \). 
On the other hand, let  \( g_n(t)=U_t( f_n(t),p) \). Then from the boundedness of \( U \), the sequence of functions \(\{g_n\}\) is well-dominated. 
Lemma \ref{fatou} implies that there exists \(g\in L^1(\mu)\) such that \( g_n(t) \to g(t) \) for a.e. \( t \in T \) and \( \alpha_n=\int_T g_n(t)d\mu \to \int_T g(t)d\mu\) up to subsequence. 
Hence \( g(t)=U_t(f(t),p)\)  for a.e. \( t \in T \) and \( \alpha=\lim_n \alpha_n =\int_T g(t)d\mu=\int_T U_t(f(t),p) d\mu. \)

Next, we show the upper semicontinuity of \(A_2\). Since \(K_2\) is compact, in order to prove \(A_2\) is upper semicontinuous, it is sufficient to show that the graph of \(A_2\) is closed.
Let \(p_n\to p\) in the weak* topology and \(y_n\to y\) in the norm topology.  We want to show that \((x,\alpha)\in A_2(p,y) \) when \(x_n\to x\) in norm and \(\alpha_n\to \alpha\) with \((x_n,\alpha_n)\in A_2(p_n,y_n)\) for all \(n\). 
There exists \{\(f_n\)\} such that \(x_n=\int_T f_n(t)d\mu\) and \(\alpha_n = \int_T U_t(f_n(t),p_n)d\mu\) with \(f_n(t)\in B(t,p_n) \) for a.e \(t\in T\) for all \(n\). 
Clearly \(\{f_n\}\) is well-dominated. 

Let \(g_n(t) = U_t(f_n(t), p_n)\) and \(\phi_n(t) = (f_n(t), g_n(t))\). 
Then it is clear that \( \{ g_n\} \) and \( \{ \phi_n\} \) are both well-dominated. Consequently, there exists an integrable function \( \phi\) on \( T\) such that \( \phi (t) \in w\text{-Ls} \{ \phi_n (t) \} \) a.e. \( t \in T \) and \( \int_T \phi d\mu \in w\text{-Ls} \{\int_T \phi_n d\mu\}  \) by Lemma \ref{fatou}, where \( \phi(t)=(f(t), g(t)) \) for some \(f\in L^1(\mu, E)\) and \(g\in L^1(\mu)\) with \(f(t) \in X(t)\)  and \(g(t) \in \mathbb{R} \).  
Then we can extract a convergent subsequence \( \{ \phi_n \} \) (we do not relabel) such that \( \phi_n(t) \to \phi(t) \) weakly for a.e. \( t \in T \) and \(\int_T \phi_n(t)d\mu \to \int_T \phi(t)d\mu\) weakly. So we have \( f_n(t) \to f_n(t)\) weakly for a.e. \( t \in T \) , \( g_n(t) \to g(t) \) weakly for a.e. \( t \in T \) , \( \int_T f_n(t)d\mu \to \int_T f(t)d\mu\) weakly and \( \alpha_n=\int_T g_n(t)d\mu \to \int_T g(t)d\mu\).

Because \(x_n = \int_T f_n(t)d\mu\) converges to \(x\) in norm, \(\int_T f(t)d\mu =x\). By the joint continuity of \(U_t\), \(U_t(f_n(t),p_n)\to U_t(f(t),p)\)  for a.e. \( t \in T. \) Hence, we have \(g(t)= U_t(f(t),p)\) a.e. \( t \in T \) and \( \int_T U_t(f(t),p) d\mu=\int_T g(t)d\mu=\lim_n \alpha_n=\alpha  . \)

Now it remains to show \(f(t) \in B(t,p)\). %and \(\alpha = \int_T U_t(f(t),p)d\mu\).  
Because \(X(t)\) is norm compact, \(f_n(t)\) converges up to subsequence to some limit in norm, which must be equal to \(f(t)\). 
% From \(\{p_n\}\) we can also extract a subsequence, which again we do not relabel, that converges to \(p\) in the weak* topology. 
It follows that for a.e. \(t \in T\), \(p_n\cdot f_n(t) \to p\cdot f(t)\). Since \(p_n \cdot f_n(t) \leq p_n\cdot e(t) + \text{max }p_n \cdot Y(t)\), we have 
\begin{equation}
 p\cdot f(t)\leq p\cdot e(t) + \text{max }p \cdot Y(t).
\end{equation}
Therefore, \(f(t)\in B(t,p)\) for almost all \(t\in T\).  In sum, we showed that \(A_2\) is norm upper semicontinuous.

We now prove the lower semicontinuity of \(A_2\). Suppose \( (x,\alpha)\in A_2(p,y) \). In order to show \(A_2\) is lower semicontinuous, it suffices to find a sequence \( (x_n,\alpha_n) \) such that \( (x_n,\alpha_n)\in A_2(p_n,y_n) \) converging to \( (x,\alpha) \) in norm.
Since \( (x,\alpha)\in A_2(p,y) \), there exists a function \(f\) such that \( x=\int_T f(t)d\mu \) and \(\alpha = \int_T U_t(f(t),p) \). 
Notice that since for any \(p\in \Delta\), \(B(t,p)\) is a norm closed subset of \(X(t)\), it is norm compact. Clearly it is convex.  

Consider \( p_n\to p\) in the weak* topology and, \( y_n\to y \) in the norm topology. Note that \( B(t,p_n) \) is convex and norm compact. 
Thus one can choose \(f_n(t)\) from \( B(t,p_n) \) such that \(f_n(t)\) is the closest to \(f(t)\), i.e., 
\begin{equation} \label{eq:mindist}
\norm{f_n(t)-f(t)} \leq \norm{z-f(t)} \text{ for all } z \in B(t,p_n).
\end{equation} 

We will show that \(f_n\) is measurable. Note that \(B(\cdot,p)\) has a measurable graph. To see this, we adopt \cite{ky91}.
For \( p\in\Delta\), define \(\xi_p:T\times E \to [-\infty, \infty]\) by \(\xi_p(t,x)=p\cdot x-p\cdot e(t)-\text{max }p\cdot Y(t)\). By Proposition 3 in \cite{hildenbrand74} (p.60), \(\text{max }p\cdot Y(t)\) is measurable in \(t\). Then \(\xi_p\) is measurable in \(t\) and continuous in \( x \). By Proposition 3.1 in \cite{ya3}, \(\xi_p(\cdot,\cdot) \) is jointly measurable. Notice that
\begin{equation}
 G_{B(\cdot,p)}=\{(t,x)\in T\times X(t): p\cdot x \leq p\cdot e(t)+ \text{max }\; p\cdot Y(t)\}=\xi_p^{-1}([-\infty,0])\cap G_X
\end{equation}
and thus the budget correspondence \(B(\cdot,p)\) is graph measurable given \(p\).

By Castaing's Representation Theorem in \cite{ya3}, there exists \( \{h_m^n(t): m \in \mathbb{N} \} \) whose norm closure is \(B(t,p_n) \). 
Let 
\begin{equation}
 \Psi_m^n(t) = \{z\in B(t,p_n): \norm{z-f(t)}\leq \norm{h_m^n (t)-f(t)} \}
\end{equation}
and
\begin{equation}
 \Psi^n(t)\equiv \cap_{m=1}^\infty\Psi_m^n(t).
\end{equation}
From the fact that \(B(t,p)\) is norm compact and the continuity of \( \norm{\cdot} \), it follows that \(\Psi_m^n(t)\) is a non-empty measurable correspondence. Then the correspondence \(\Psi^n:T\to 2^{E}\) has a measurable graph. Since the set  \(\{h_m^n(t): m \in \mathbb{N} \}\) is dense in  \( B(t, p_n) \), only the closest point \(f_n(t)\) to \(f(t)\) belongs to \(\Psi^n(t)\). Therefore \(\Psi^n\) is a measurable function which is equal to \(f_n\) for \(\mu-\)almost all \(t \in T\).  Hence, \(f_n\) is measurable for all \(n\). It is now clear that \(f_n\in \mathcal{S}_X^1\) for all \(n\).

We will show that \(\int_T f_n (t) d\mu \to \int_T f(t) d\mu \) in norm. Let \( \varepsilon >0\). Pick \(b\in B(t,p)\cap N_{\varepsilon}(f(t)) \) where \(N_{\varepsilon}(f(t)) \) is a neighborhood of \(f(t)\) with the radius \(\varepsilon\).
Suppose \(b\notin B(t,p_n)\) for infinitely many \(n \). Then
\begin{equation}
 p_n\cdot b > p_n \cdot e(t) + \text{max } p_n\cdot Y(t).
\end{equation}

For some \(\varepsilon \in (0,1)\), we have 
\begin{equation}
 p_n \cdot \varepsilon b >  p_n \cdot e(t) + \text{max } p_n\cdot Y(t).
\end{equation}
As \( n \to \infty \), it follows 
\begin{equation}
    p \cdot \varepsilon b\geq  p \cdot e(t) + \text{max } p\cdot Y(t)
\end{equation}
which contradicts \(b\in B(t,p) \). 

Thus, there is a \( \bar n \) such that \(b \in B(t,p_n) \) for all \(n\geq\bar n \). Because of the minimizing property \eqref{eq:mindist} of \( f_n(t)\) in \( B(t, p_n ) \), we have \( \Vert f_n(t) - f(t) \Vert < \varepsilon \). So \(\lim_{n\to\infty}\int_T U_t(f_n(t),p_n)d\mu =\int_T U_t(f(t),p)d\mu\). And the Dominated Convergence Theorem \footnote{See Theorem 3 in \cite{du} p. 45.} in \cite{du} says 
\begin{equation}
\lim_{n\to\infty} \int_T \norm{f_n(t) - f(t)}d\mu = 0.
\end{equation}
Let \(x_n=\int_T f_n(t)d\mu\) and \(\alpha_n =\int_T U_t(f_n(t),p_n)d\mu\). Then \((x_n,\alpha_n)\in A_2(p_n,y_n)\) for all \(n \geq \bar n \).
Moreover, 
\begin{equation}
\norm{x_n-x} = \norm{\int_T f_n(t)d\mu - \int_T f(t)d\mu} \leq \int_T \norm{ f_n(t) - f(t) } d\mu\to 0.
\end{equation}
The last inequality comes from Theorem 4 in \cite{du} (p.46).
Hence, \(x_n\to x\) in norm and \(\alpha_n\to\alpha\).  
It follows that \(A_2\) is norm lower semicontinuous.

We will show that \(A_2\) is convex valued. 
Pick \((x,\alpha)\in A_2(p,y) \) and \( (x',\alpha')\in A_2(p,y) \). 
Then there is a function \(f:T\to E\) such that \(\int_T f(t)d\mu = x\) and \(\int_T U_t(f(t),p)d\mu = \alpha\) with \(f(t)\in B(t,p) \) a.e. \(t\) and a function \(f':T\to E \) such that \( \int_T f'(t)d\mu = x'\) and \( \int_T U_t(f'(t),p)d\mu=\alpha'\) with \(f'(t)\in B(t,p) \) a.e \(t\).   
Let \(Z =  E\times \mathbb{R}\) and we define a function \(h:T\to Z\) by \(h(t)= (f(t),U_t(f(t),p)) \) and a function \(h':T\to Z\) by \(h'(t)=(f'(t),U_t(f'(t),p)) \). 
It is clear that \(h,h'\in L^1(\mu, Z) \). 
Let \(\nu\) be a measure defined by
\begin{equation}
 \nu(S)= (\int_S h(t) d\mu, \int_S h'(t) d\mu)
\end{equation}
for \(S\in\mathcal{T}\).
Notice that \( \nu(\emptyset)=((\mathbf{0},0),(\mathbf{0},0)) \) and \( \nu(T)=((x,\alpha), (x',\alpha')) \). 
It follows from Theorem 4.1 in \cite{ks2013} (Lyapunov's convexity theorem)  
that the range of \(\nu\) is convex. Thus there exists \(S\in \mathcal{T}\) such that \( \nu(S) = \lambda \nu(T)=((\lambda x, \lambda \alpha), (\lambda x', \lambda \alpha')) \) for \(\lambda\in (0,1) \). Let \(f_\lambda = f{\chi_S}+f'{\chi_{T\backslash S}}\). Then \( \int_T f_\lambda (t) d\mu = \int_S f(t) d\mu + \int_{T\backslash S} f'(t)d\mu = \lambda x + (1-\lambda)x'\) and \( \int_S U_t(f(t),p) d\mu + \int_{T\backslash S} U_t(f'(t),p)d\mu = \lambda \alpha + (1-\lambda)\alpha'\). It is clear that \(f_\lambda (t)\in B(t,p) \). Therefore, \(A_2\) is a convex valued correspondence. 
\end{proof}

\begin{lma}
\label{gamma_eq}
   \(\Gamma\) has an equilibrium.
\end{lma}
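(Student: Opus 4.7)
The plan is to invoke Lemma \ref{lma_debreu} (Debreu's social equilibrium theorem) on $\Gamma$, so the task reduces to verifying its three hypotheses (i)--(iii). Most of the work has already been done in Lemmas \ref{int_sy}--\ref{A_property}; what remains is bookkeeping and a check that the topologies on $K_1,K_2,K_3$ fit together.

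First I would address (i). The set $K_1=\Delta$ lies in the dual Banach space $E^*$ and is non-empty, convex, and weak* compact by Alaoglu's theorem. By Lemma \ref{int_sy}, $\int_T X(t)\,d\mu$ and $\int_T Y(t)\,d\mu$ are non-empty (both $e$ and $\mathbf{0}$ are selectors), convex, and norm compact in $E$; hence $K_2=\int_T X(t)\,d\mu\times[0,1]$ and $K_3=\int_T Y(t)\,d\mu$ are non-empty, convex, and compact in the product-norm and norm topologies respectively. The product $K=K_1\times K_2\times K_3$ is then compact when equipped with the product of the weak* topology on $K_1$ and the norm topologies on $K_2,K_3$, which is precisely the topology used throughout the proof of Lemma \ref{A_property}.

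Next, condition (ii) is exactly the content of Lemma \ref{A_property}: each $A_i$ is continuous with non-empty, closed and convex values under the topologies just described.

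For (iii), quasi-concavity in the own variable is immediate from linearity: $u_1(\cdot,(x,\alpha),y)$ is linear in $p$, $u_2(p,(x,\cdot),y)=\alpha$ is linear, and $u_3(p,(x,\alpha),\cdot)$ is linear in $y$. Joint continuity of $u_2$ is trivial, and for $u_1,u_3$ it suffices to verify that the pairing $(p,z)\mapsto p\cdot z$ is jointly continuous on $\Delta\times E$ when $\Delta$ carries the weak* topology and $E$ the norm topology: if $p_n\to p$ weak* and $z_n\to z$ in norm, then by the uniform boundedness principle $M:=\sup_n\norm{p_n}_{E^*}<\infty$, and
\begin{equation}
\lvert p_n\cdot z_n-p\cdot z\rvert\leq M\,\norm{z_n-z}+\lvert p_n\cdot z-p\cdot z\rvert\longrightarrow 0.
\end{equation}
With (i)--(iii) in hand, Lemma \ref{lma_debreu} produces an equilibrium $k^*=(p^*,(x^*,\alpha^*),y^*)\in K$ of $\Gamma$. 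The main obstacle is really no obstacle at all: it is just the notational discipline of carrying the mixed weak*/norm topologies through each ingredient consistently; all the substantive analytic work (convexity of the Bochner integral, continuity of the truncated budget correspondence, exact Fatou) was already settled in the preceding lemmas.
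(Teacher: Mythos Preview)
Your proposal is correct and follows essentially the same route as the paper: verify hypotheses (i)--(iii) of Lemma~\ref{lma_debreu} using Lemmas~\ref{int_sy} and~\ref{A_property}, then invoke Debreu's theorem. You are in fact a bit more careful than the paper in spelling out the mixed weak*/norm topology on $K$ and the joint continuity of the bilinear pairing, which the paper only records as ``easy to see.''
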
   
\begin{proof} 
As we proved in the proof of Lemma \ref{A_property}, \(K_1\), \(K_2\) and \(K_3\) are non-empty, convex and compact. 
%Note \(\int_T e(t)d\mu\in \int_T X(t)d\mu\) for all \(t\in T\) and \(\int_T U_t(e(t),p)d\mu\in [0,1]\). Thus, \(K_2\) is non-empty. By Lemma \ref{int_sy}, \(\int_T X(t)d\mu\) is norm compact and convex. It follows that \(K_2\) is compact and convex. 
Therefore, (i) of Lemma \ref{lma_debreu} is satisfied. 
Lemma \ref{A_property} shows that \(A_i\;(i=1,2,3) \) satisfies (ii) of Lemma \ref{lma_debreu}. 
It is easy to see that \(u_i\;(i=1,2,3) \) is continuous and quasi-concave on \(K_i\). Hence, (iii) of Lemma \ref{lma_debreu} holds. 
Now we can appeal to Lemma \ref{lma_debreu} to have an equilibrium \( (p^*, (x^*,\alpha^*),y^*) \) for \(\Gamma\).
\end{proof}

\section{Proof of the Auxiliary Theorem}
\label{proof_aux}

We are now ready to provide the proof of the Auxiliary Theorem. 

\vspace{5mm}
\pfa 
We will prove that for an equilibrium for \(\Gamma\), there is a competitive equilibrium for the economy. 

Suppose that \( (p^*,(x^*,\alpha^*),y^*) \) is an equilibrium for \(\Gamma\). Hence there exist \(f^*\in\mathcal{S}_X^1\) such that that \( x^*=\int_T f^*(t)d\mu \) with \(f^*(t)\in B(t,p^*)\) and \(g^*\in \mathcal{S}_Y^1 \) such that \( y^*=\int_T g^*(t)d\mu \). We will show that \((p^*,f^*,g^*) \) is a competitive equilibrium for the economy.

(i) We show that \(g^*\) is a profit maximization production plan. 

By the definition of \(u_3\), \( p^*\cdot y^*=p^*\cdot \int_T g^*(t)d\mu\geq p^*\cdot y\) for any \(y\in \int_T Y(t)d\mu\). 
Therefore, \( p^*\cdot \int_T g^*(t)d\mu = \text{max }p^{*} \cdot \int_T Y(t)d\mu\). 
By Proposition 6 in \cite{hildenbrand74} (p.63),  we have \( \text{max }p^{*} \cdot \int_T Y(t)d\mu = \int_T \text{max}\ p^*\cdot Y(t)d\mu \).  Thus \( p^*\cdot g^*(t)=\text{max }p^*\cdot Y(t)\) for almost all \(t\in T\). Note that Proposition 6 in \cite{hildenbrand74} works in our commodity space \(E\).

(ii) Let us prove \(p^*\cdot f^*(t)\leq p^*\cdot e(t) + p^*\cdot g^*(t)\) a.e. \(t\in T\). 

Note that \(f^*(t)\in B(t,p^*)=\{x\in X(t):p^*\cdot x\leq p^*\cdot e(t) + \text{max } p^*\cdot Y(t)\}\) for almost all \(t\in T\).
From \(p^*\cdot g^*(t)= \text{max }p^*\cdot Y(t)\) for a.e. \(t\in T\), we have the desired result.

(iii) We show that \(U_t(x,p^*)>U_t(f^*(t),p^*)\) implies \(p^*\cdot x> p^*\cdot e(t) + p^*\cdot g^*(t)\) for almost all \(t\in T\). 

 By way of contradiction, suppose there exists a non-empty subset \(S\in\mathcal{T}\) which is of positive measure and  let \(F\) be a correspondence from \(S\) to \(X(t)\) defined by
\(F(t)=\{x\in X(t): U_t(x,p^*)> U_t(f(t),p^*) \text{ and } p^*\cdot x \leq p^*\cdot e(t)+ p^*\cdot g^*(t)\}\) for all \(t\in S\).
Recall that \(U_t(\cdot,p^*) \) is measurable on the graph of \(X\). Recall also that \( B(\cdot,p^*) \) and \(X\) have measurable graphs. Therefore, \(F\) has a measurable graph. Moreover, since \(X\) is integrably bounded, so is \(F\). 
Hence, there is a Bochner integrable selection \(f'\) of \(F\).  
We now define \(f'' = f'{\chi_S} + f^*{\chi_{T\backslash S}}\). 
It is clear that \( \int_T U_t(f''(t),p^*) d\mu= \int_S U_t(f'(t),p^*) d\mu + \int_{T\backslash S} U_t(f^*(t),p^*) d\mu > \int_T U_t(f^*(t),p^*) d\mu =\alpha^* \) which is a contradiction.

(iv) We prove that \( (f^*,g^*) \) is a feasible allocation and a production plan. 

We know that  \( p^*\cdot f^*(t)\leq p^*\cdot e(t) + p^*\cdot g^*(t)\) a.e. \(t\in T\). 
By aggregating over \(T\), we have \( p^*\cdot (\int_T f^*(t)d\mu - \int_T e(t)d\mu -\int_T g^*(t) d\mu)\leq 0\). From the definition of the equilibrium of \(\Gamma\), it follows that for any \( p\in \Delta \),
\begin{equation}
p\cdot (\int_T f^*(t)d\mu - \int_T e(t)d\mu -\int_T g^*(t))d\mu  \leq p^*\cdot (\int_T f^*(t)d\mu - \int_T e(t)d\mu -\int_T g^*(t)d\mu)\leq 0.
\end{equation}
Therefore, \(-(\int_T f^*(t)d\mu - \int_T e(t)d\mu -\int_T g^*(t)d\mu)\in E_+\) which leads to \( \int_T f^*(t)d\mu \leq \int_T e(t)d\mu +\int_T g^*(t)d\mu \).
\qed

\section{Proof of the Main Theorem}
\label{proof_main}

We provide the proof of the Main Theorem. 
The proof follows Noguchi \cite{noguchi1997} by considering a net of truncated subeconomies, whose consumption and production sets are norm compact, which is in line with Toussaint \cite{toussaint} and Khan and Yannleis \cite{ky91}.  
From the Auxiliary Theorem, we have a net of competitive equilibria for the subeconomies. We then construct a sequence of competitive equilibria. Finally, by invoking the exact Fatou's lemma for infinite dimensional separable Banach spaces, we obtain a competitive equilibrium for the original economy.

\vspace{5mm}
\pfm 
As in Noguchi \cite{noguchi1997}, we construct the  norm compact subsets of \(X(t)\) and \(Y(t)\).    
Let \(\mathcal{F}=\{K:T\to 2^{E}| K = \text{co} (K^X\cup K^Y) \text{ where } K^X = \text{co}(\cup_{i=1}^m \varphi_i), K^Y = \text{co}(\cup_{j=1}^l \psi_j) \text{ such that }\varphi_i:T\to E \text{ and } \psi_j:T\to E \) \(\text{are } \text{measurable } \text{with } \varphi_i(t) \in X(t) \text{ and } \psi_j(t)\in Y(t) \text{ for all } t\in T; e(t), \eta(t) \in K^X(t)\text{ and } 0\in K^Y(t) \text{ for all } t\in T\}\). 

Consider \(K = \text{co}(K^X\cup K^Y)\) such that  \(K^X(t) = \text{co}(e(t)\cup \eta(t)) \text{ for all } t\in T\) and \(K^Y(t) = 0  \text{ for all }  t\in T\). Then \(K\in \mathcal{F}\) and thus \(\mathcal{F}\) is non-empty. Let \(K_1,K_2\in \mathcal{F}\). Then it is clear that \(\text{co}(K_1\cup K_2)\in \mathcal{F}\), which implies that \(\mathcal{F}\) is directed under the inclusion. 
Notice that for every \(t\in T\), \(K^X(t) = \text{co}(\cup_{i=1}^m \varphi_i(t))\) and  \(K^Y(t) = \text{co}(\cup_{j=1}^l \psi_j(t))\) are norm compact and thus \(K(t)\) is also norm compact (see Jameson \cite{jameson1974} p.208). 
Now it follows that for \(K\in \mathcal{F}\), \(K^X\) and \(K^Y\) are non-empty, convex and norm compact valued, respectively. By Theorem III. 30 in \cite{castaing_valadier},  \(K^X\) and \(K^Y\) are graph measurable. 

We define a truncated economy \(\mathcal{E}^K=[(T,\mathcal{T},\mu), (K^X(t), K^Y(t),U_t^K, e (t))_{t\in T}]\)
where \(U_t^K\) is the utility function \(U_t\) whose first domain is restricted to \(K^X(t)\). 
Since \(K^X(t)\) is convex and norm closed, by the separation theorem it is weakly closed. Thus it belongs to the Borel \(\sigma\)-
algebra generated by the weak topology of \(E\).  It is clear that \(U^K\) is measurable.

It is easy to see that \(\mathcal{E}^K\) satisfies all the assumptions of the Auxiliary Theorem. Therefore, we appeal to the Auxiliary Theorem to obtain a competitive equilibrium \((p_K,f_K, g_K) \) for \(\mathcal{E}^K\). 
Notice that \(\{(p_K,f_K, g_K):K\in \mathcal{F}\}\) is a net directed by inclusion. 
For all \(K \in \mathcal{F}\), \(K^X(t)\subset X(t)\) and, by A.1, \(X\) is integrably bounded and weakly compact valued. Thus \(\{f_K\}\) is well-dominated. We apply the same logic to \(K^Y\) and \(Y\) to see \(\{g_K\}\) is also well-dominated. %In addition, \(\{e_{k_F}\}\) is also well-dominated. 

Since \(X\) and \(Y\) are non-empty closed valued correspondences by A.1 and A.2, \((T,\mathcal{T},\mu)\) a complete probability space, \(E\) a complete separable metric space,  by Theorem III. 30 in \cite{castaing_valadier} there are two sequences of measurable functions \(\varphi_i:T\to E\) and \(\psi_i:T\to E\) such that 
\begin{equation}\label{phi_psi}
\text{cl}\{\varphi_i(t)\}_{i\in \mathbb{N}} = X(t) \text{ and } \text{cl}\{\psi_j(t)\}_{j\in \mathbb{N}} = Y(t)  \text{ for all } t\in T.
\end{equation} 
We then construct \(K_{m}^X(t)\) using \(\{\varphi_i(t)\}_{i=1}^m\) and \(K_{l}^Y(t)\) using \(\{\psi_j(t)\}_{j=1}^l\). 
Let us define \(n = \text{min }\{m,l\}\) where \(m,l\) are the numbers of  \(\varphi_i\) and of \(\psi_j\) in \(K\), respectively. 
Then consider a sequence of truncated subeconomies \(\{\mathcal{E}^n\}\) consisting of \(K_n^X(t)\) and \(K_n^Y(t)\) for all \(t\in T\). 
By the Auxiliary Theorem, we now have a sequence of competitive equilibria \( (p_n, f_n ,g_n)\) for \(\mathcal{E}^n\).

We appeal to Lemma \ref{fatou} to have \(f\in L^1(\mu,E)\) and \(g\in L^1(\mu,E)\) such that \(f(t) \in X(t)\),  \(f(t)\in w\text{-Ls}\;\{f_n(t)\}\) a.e \(t\in T\) and \( \int_T f d\mu \in w\text{-Ls}\; \{\int_T f_n d\mu\}\) as well as \(g(t)\in Y(t)\), \(g(t)\in w\text{-Ls}\;\{g_n(t)\}\) a.e. \(t\in T\)  and \(\int_T g d\mu\in w\text{-Ls}\; \{\int_T g_n d\mu\}\). Therefore, \(f\) is an allocation and \(g\) is a production plan. 
Since \(p_n\) belongs to \(\Delta\) which is weak* compact, it has a subsequence still denoted by \(p_n\) weak* converging to \( p\).

We will now show that \((p,f,g)\) is a competitive equilibrium for \(\mathcal{E}\). 

%\vspace{5mm} 
Step 1: Let us show that for \(x\in X(t)\),
\begin{equation}
	\label{inq_better}
 U_t(x,p) > U_t(f(t),p) \text{ implies } p\cdot x >  p\cdot e(t) + \text{max } p \cdot Y(t) \text{ for almost all } t \in T .
\end{equation} 
We follow Khan and Sagara \cite{ks2017} for this proof. 
By method of contradiction, suppose that there exists \(S \in  \mathcal{T}\) of positive measure with the following property: for every \(t \in S\) there exists \(\hat x \in X(t)\) such that \(U_t(\hat x,p) > U_t(f(t),p)\) and \(p \cdot \hat x \leq p \cdot e(t) + \text{max }p\cdot Y(t)\). 
Since \(p \cdot e(t)  + \text{max }p\cdot Y(t) > 0 \) by A.3 and A.8, it follows from the joint continuity of \(U_t \) that \(U_t(\varepsilon \hat x, p) > U_t(f(t),p)\) and \(p \cdot\varepsilon \hat x < p \cdot e(t) + \text{max }p\cdot Y(t)\) for some \(\varepsilon  \in  (0, 1)\). 
Thus, we can assume without loss of generality that for every \(t \in S\) 
there exists \(\hat x \in X(t)\) such that 
\(U_t(\hat x, p) > U_t(f(t), p)\) and \(p \cdot \hat x < p \cdot e(t) + \text{max }p\cdot Y(t)\). 
Let us define the correspondence \(\Lambda : S \to  2^{E}\) by 
\[
 \Lambda(t) = \{x \in X(t) \vert  U_t(x,p) > U_t(f(t),p), p \cdot x < p \cdot e(t) + \text{max }p\cdot Y(t)\}.
\]
\(\Lambda\) is an integrably bounded correspondence and \(\hat x \in \Lambda(t)\). 
We now show that \(\Lambda\) is graph measurable. 
%Since \(\Lambda(t)\) is the intersection of the correspondence defined by 
Let 
\(\Lambda_1 (t):= \{x \in X(t) \vert  U_t(x,p) > U_t(f(t),p)\}\) and \(\Lambda_2 (t) := \{x \in E \vert  p \cdot x < p \cdot e(t) + \text{max }p\cdot Y(t)\}\). Then  \(\Lambda(t)= \Lambda_1 (t)\cap \Lambda_2 (t)\). We need to prove that \(\Lambda_1\) and \(\Lambda_2\) are graph measurable. 
In the proof of Lemma \ref{A_property}, we already showed the joint measurability of the function given by \( (t, x) \to p\cdot x - p\cdot e(t) - \text{max }p\cdot Y(t) \). Therefore,  \( \Lambda_2\) is graph measurable. 

We turn to \(\Lambda_1\). 
Let \(\zeta : T\times E \to T \times E \times E\) be a mapping defined by \(\zeta (t,x) = (t,x,f(t))\) and \(\text{proj}_{T\times E\times E}\) be a projection of \((T \times E) \times (T \times E \times E)\) onto the range space \(T \times E \times E\) of \(\zeta\). 
By the projection theorem,\footnote{see Theorem III.23 in \cite{castaing_valadier}.} \(\text{proj}_{T\times E\times E}(G_\zeta)\) belongs to \(\mathcal{T}\otimes \mathcal{B}(E,w) \otimes \mathcal{B}(E,w)\).
We define a set \(H\) by 
\[
H:=\{(t,x,x')\in T \times E \times E \vert  U_t(x,p)> U_t(x',p) \} \cap ((G_X)\times E) \cap \text{proj}_{T\times E\times E}(G_\zeta).
\]
Then in view of A.5 and A.6, \(H\) belongs to \(\mathcal{T} \otimes \mathcal{B}(E,w) \otimes \mathcal{B}(E,w)\). 
Let \(\text{proj}_{T\times E}\) be the projection of \((T \times E)\times E\) onto \(T \times E\).  
Since \(G_{\Lambda_1} = \text{proj}_{T \times E} (H)\), we again appeal to the projection theorem to argue that \(G_{\Lambda_1}\) belongs to \(\mathcal{T} \otimes \mathcal{B}(E,w)\). 

Therefore, \(\Lambda\) has a measurable selection by Aumann's measurable selection theorem in \cite{aumann1969}. 
Let \(h: S \to E\) be a measurable selection from \(\Lambda\).
By Theorem III. 30 in \cite{castaing_valadier}, we can choose a sequence of measurable selections \(h_n:S\to E\) such that \(h_n(t)\in X(t)\) converges to \(h(t)\) in norm for all \(t\in S\).  
By Lemma \ref{fatou}, there exists a Bochner integrable function \(\hat h : S \to E \) such that \(\hat h(t) \in  \text{w-Ls}\{h_n(t)\} \) 
 and \(\hat h(t) \in X(t)\) a.e. \(t \in S\). 
Hence, there is a subsequence of \(\{h_n(t)\}\) in \(E\) converging weakly to \(\hat h(t)\) for a.e. \(t \in  S\).  
It is clear that \(\hat h(t) = h(t)\) a.e. \(t \in S\) and we also have \((f(t),h(t)) \in \text{w-Ls}\{(f_n(t),h_n(t))\}\) a.e. \(t \in S\).

Suppose now that the following set defined by 
\[
 \bigcup_{n\in \mathbb{N}} \{t \in  S \vert  U_t^n(h_n(t),p_n) > U_t^n(f_n(t),p_n), p_n \cdot h_n (t) < p_n \cdot e(t) + \text{max }p_n\cdot K_n^Y(t)\}
\]
is of measure zero. 
Then for each \(n\),  \(U_t^n(f_n(t),p_n)\geq U_t^n(h_n(t),p_n)\) 
or \(p_n\cdot h_n(t) \geq p_n \cdot e(t) + \text{max }p_n\cdot K_n^Y(t)\) a.e. \(t \in S\).  
Notice  for any \(y\in Y(t)\),  
there is a sequence of measurable functions \(y_n:T\to E\) such that \(y_n(t)\in K_n^Y(t)\) converges to \(y\) in norm for all \(t\in T\) by Theorem III. 30 in \cite{castaing_valadier}. 
When \(p_n\) converges to \(p\) in the weak* topology, we have \(p_n \cdot h_n(t)\to p\cdot h(t)\) and \(p_n \cdot y_n(t) \to p\cdot y\). 
Passing to the limit %along a suitable subsequence of \(\{(f_n(t), h_n(t)\}\) 
yields \(U_t(f(t),p) \geq U_t(h(t),p)\) or \(p \cdot h(t) \geq p\cdot e(t) + \text{max }p\cdot Y(t)\) a.e. \(t \in S\).\footnote{Note that for \(x\in K_n^X(t)\), \(U_t^n(x,p) = U_t(x,p)\) for any \(p\in \Delta\).}  
But this is a contradiction to the fact that \(h\) is a measurable selection from \(\Lambda\). 
Hence, there exists \(n\) such that 
\(\{t \in S \vert U_t^n(h_n(t),p_n) > U_t^n(f_n(t),p), p_n\cdot h_n(t) < p_n \cdot e(t) + \text{max }p_n\cdot K_n^Y(t)\}\) is of positive measure. 
However, this contradicts the fact that \(p_n\) and \(f_n\) are a price and an allocation of a Walrasian equilibrium for \(\mathcal{E}^n\).  We proved \eqref{inq_better}.

Indeed, we can further show that 
\begin{equation} \label{ineq:f}
 p\cdot f(t) \geq p\cdot e(t) + \text{max } p \cdot Y(t)
\end{equation}
for almost all \(t \in T\).

By A.4 (ii), if \(f(t)\) is a satiation point, \eqref{ineq:f} follows. 
If \(f(t)\) is not a satiation point, \(f(t)\) belongs to the weak closure of the upper contour set \(\{x'\in X(t): U_t(x',p)> U_t(f(t),p)\}\) for any \(p\in \Delta\). Thus, \eqref{inq_better} implies \eqref{ineq:f}.

\vspace{5mm} 
Step 2: We show that \(f\) is a feasible allocation and \( g \) is a feasible production plan. 

Since \((p_n,f_n,g_n)\) is a competitive equilibrium for \(\mathcal{E}^n\), 
it is clear that 
\( \int_T f_n(t)d\mu \leq \int_T e (t) d\mu + \int_T g_n(t) d\mu \). 
Recall that for \(\{f_n\}\) and \(\{g_n\)\} Lemma \ref{fatou} holds. Thus 
we can extract subsequences (which we do not relabel) from \(\{f_n\}\) and \(\{g_n\}\) such that \(\int_T f_n(t)d\mu \to \int_T f(t)d\mu\) weakly and \(\int_T g_n(t)d\mu \to \int_T g(t)d\mu\) weakly. 
%In addition, Lemma \ref{fatou} also works for \(\{e_n\}\), the initial endowment of \(\mathcal{E}^n\), and thus there exists \(\omega\in L^1(\mu, E)\) such that \(\omega(t) \in w\text{-Ls}\;\{e_n(t)\}\) a.e \(t\in T\) and \(\int_T \omega d\mu \in w\text{-Ls}\; \{\int_T e_n d\mu\}\). 
%And \(e_n(t) \) converges to \(e(t)\) in norm for a.e \(t\in T\). Thus \(\omega(t)=e(t)\) a.e. \(t\in T\). 
Now from \(\int_T f_n(t)d\mu \leq   \int_T e(t)d\mu + \int_T g_n(t)d\mu\) 
we obtain  
\begin{equation}\label{ineq:feas}
 \int_T f(t)d\mu \leq \int_T e(t) d\mu + \int_T g(t)d\mu. 
\end{equation}

\vspace{5mm} 
Step 3: We prove that \( p \cdot f(t) \leq p \cdot e(t) +p\cdot g(t) \) for almost all \( t \in T\). 

From \eqref{ineq:f}, we have
\begin{equation}\label{ineq:g}
p\cdot f(t)\geq p\cdot e(t) + p\cdot g(t)
\end{equation}
for almost all \(t\in T\).
By integrating \eqref{ineq:g} over \(T\), 
\begin{equation}
 \int_T [p\cdot f(t) - p\cdot e(t) - p\cdot g(t)]d\mu =  p\cdot \int_T [ f(t) - e(t) -  g(t)] d\mu \geq 0.
\end{equation}
But from \eqref{ineq:feas} it follows that 
\begin{equation}
  p\cdot \int_T [ f(t) - e(t) -  g(t)] d\mu = \int_T [p\cdot f(t) - p\cdot e(t) - p\cdot g(t)]\leq 0.
\end{equation}
Hence, we can conclude \(\int_T [p\cdot f(t) - p\cdot e(t) - p\cdot  g(t)]= 0.\)
Therefore, we have
\begin{equation}\label{eq:f}
 p\cdot f(t) = p\cdot e(t)+ p\cdot  g(t)
\end{equation}
for almost all \(t\in T\).

\vspace{5mm} 
Step 4: Let us prove \(p\cdot g(t)= \text{max } p \cdot Y(t)\) a.e. \(t\in T\). 

From \eqref{ineq:f} and \eqref{eq:f}, we have the following inequality:
\begin{equation}
\text{max } p \cdot Y(t) \leq   p\cdot f(t) - p\cdot e(t) = p\cdot  g(t)
\end{equation}
for almost all \(t\in T\). 
Obviously, we have \(\text{max } p \cdot Y(t) \geq p\cdot g(t)\). Hence, the conclusion follows. 
\qed

\section{Concluding Remarks} 
\label{conclusion}

\noindent\textbf{Remark 1 } We can appeal to 
Galerkin approximations, as suggested by Khan and Sagara \cite{ks2017}, to construct a sequence of truncated subeconomies \(\{\mathcal{E}^n\}\) with finite dimensional commodity spaces. Each \(\mathcal{E}^n\) can have a competitive equilibrium \((p_n,f_n,g_n)\) due to Greenberg et al. \cite{gsw}. We then apply the exact Fatou's lemma to the sequence of \(\{f_n,g_n\}\) to obtain \(f\) and \(g\). By the weak* compactness of \(\Delta\), we can extract a subsequence from \(\{p_n\}\) that weak* converges to \(p\in \Delta\).  Applying similar arguments as in Khan and Sagara \cite{ks2017}, we can show that \((p,f,g)\) satisfies the properties of a competitive equilibrium.\footnote{We are grateful to an anonymous referee for drawing our attention to Khan and Sagara \cite{ks2017}.} 
 %With this approach, we can obtain the existence result without the norm compact subset \(K\subset X(t)\) with \(e(t) \in K\) for all \(t\in T\).\footnote{We are grateful to an anonymous referee for drawing our attention to Khan and Sagara \cite{ks2017}.} 

\vspace{2mm}
\noindent\textbf{Remark 2 } The Auxiliary theorem can be seen as a direct proof of Greenberg et al. \cite{gsw} for infinite dimensional commodity spaces without taking the %finite dimensional 
approximation approach.  

\vspace{2mm}
\noindent\textbf{Remark 3 }
We can replace the weak compactness of production sets by the following condition: 
Let \(A_Y\) be a set defined by  
\(A_Y =\{g'\in \mathcal{S}_Y^1: \exists f'\in \mathcal{S}_X^1 \text{ s.t. } \int_T f'(t)d\mu \leq \int_T e(t)d\mu + \int_T g'(t)d\mu\}\). 
We assume \(A_Y\) is weakly compact. 

Applying our approach in the proof of the main theorem, we construct a sequence of truncated subeconomies \(\{\mathcal{E}^n\}\) and obtain a sequence of competitive equilibria \(\{p_n,f_n, g_n\}\).\footnote{Remark 3 in \cite{gsw} provided an equilibrium existence result with non-compact consumption and production sets for their economy.} 
Since \(X(t)\) is integrably bounded and weakly compact, we apply the exact Fatou's lemma to \(\{f_n\}\) to have \(f\)
and since  \(A_Y\) is weakly compact, we have \(g_n\to g\) in the weak topology. Also \(p_n \to p\in \Delta\) in the weak* topology. 
We are then able to prove that \((p,f,g)\) is a competitive equilibrium. 

For the sequence \(\{f_n\}\), we need two results: 
(by passing to a subsequence)  \(f_n\to f\) and \(f_n(t)\to f(t)\) for almost all \(t\in T\). 
These results make \(f\) a feasible allocation and \(f(t)\) a maximal element  for the agent \(t\). 
To our best knowledge, there are two ways to obtain these results: invoking the exact Fatou's Lemma or  appealing to Theorem 5.1 in Khan and Yannelis \cite{ky91}.  
Both approaches require weak compact subsets of the consumption sets. 
Therefore, even when we relax the weak compactness assumption of the consumption sets, we still need some weak compact subsets of the consumption sets which contain the set of maximal elements.

\newpage

\end{document}